\documentclass[12pt]{article}

\usepackage[utf8]{inputenc}
\usepackage[T1]{fontenc}
\usepackage[margin=1.1in]{geometry}
\usepackage{setspace}
\usepackage{amsmath,amssymb,amsthm,amsfonts,mathtools}
\usepackage{mathrsfs}
\usepackage{enumitem}
\usepackage{booktabs}
\usepackage{hyperref}
\usepackage{natbib}

\hypersetup{
  colorlinks=true,
  linkcolor=blue,
  citecolor=blue,
  urlcolor=blue
}

\newtheorem{theorem}{Theorem}
\newtheorem{proposition}{Proposition}
\newtheorem{lemma}{Lemma}
\newtheorem{corollary}{Corollary}
\newtheorem{definition}{Definition}

\newtheorem{axiom}{Axiom}

\newcommand{\R}{\mathbb{R}}

\newcommand{\Prob}{\mathbb{P}}
\newcommand{\Q}{\mathcal{Q}}
\newcommand{\Qinf}{\mathcal{Q}_{\infty}}
\newcommand{\X}{\mathcal{X}}

\newcommand{\1}{\mathbf{1}}
\newcommand{\supp}{\operatorname{supp}}
\newcommand{\osc}{\operatorname{osc}}

\title{Spectral Aggregation of Quantile Preferences}
\author{Van-Quy Nguyen\thanks{Email: \url{quynv@neu.edu.vn}}\\[0.5em]
\small\parbox{0.9\textwidth}{\centering \it Faculty of Mathematical Economics, College of Technology, National Economics University, Hanoi, Vietnam.}}

\date{\today}

\begin{document}

\maketitle

\begin{abstract}
Many collective decisions under risk are made by people who care about different parts of the outcome distribution: downside losses, typical performance, or upside gains. This paper models this disagreement with quantile preferences and studies how the represented quantile levels can be aggregated. Our main result is a spectral support theorem: a spectral social aggregation satisfies the Pareto principle if and only if its social spectrum puts mass only on quantile levels represented in society. Hence, Pareto consistency makes representative-quantile aggregation a dictatorial case. In addition, we derive spectral aggregation from rank-based axioms, develop finite and threshold-Pareto consequences, and show when local benchmark-affine and elliptical common-shape domains admit a representative-quantile reduction.
\end{abstract}

\medskip
\noindent \textbf{Keywords:} Quantile preferences; spectral aggregation; social choice under risk; Pareto aggregation; portfolio choice.

\medskip
\noindent \textbf{JEL codes:} D71; D81; G11.


\section{Introduction}
\label{sec:introduction}

Many important decisions under risk---such as pension boards selecting portfolios, regulators evaluating solvency, or public agencies determining infrastructure capacities---are made by committees rather than isolated expected-utility maximizers. In these collective environments, disagreement often extends beyond divergent beliefs or utility curvature; members frequently disagree on which specific rank of the outcome distribution should be prioritized. For instance, one member might focus on downside protection, while others emphasize typical performance or upside opportunities.

Quantile preferences provide a straightforward framework for modeling this form of heterogeneity. A decision maker characterized by a quantile level $\tau$ evaluates a random variable $X$ strictly by its quantile $Q_\tau[X]$. In this framework, low values of $\tau$ represent downside-oriented evaluation, central values correspond to median-type evaluation, and high values indicate upside-oriented evaluation. Since quantiles commute with continuous increasing transformations, the level $\tau$ itself carries the behavioral content: risk attitude is represented by the rank at which the distribution is read, rather than by the curvature of a utility. This intuition makes quantile levels natural objects to aggregate.

The central question is therefore direct: given individual quantile levels $\tau_1,\ldots,\tau_n$, which social evaluation rules respect Pareto comparisons? A tempting answer is to choose a single representative quantile level. This paper shows why that answer is generally too narrow. A representative level that no individual uses is a phantom level. Two acts can agree at every represented individual level while differing at this unrepresented level; the social ranking would then change even though no individual quantile evaluation changes.

The alternative studied here is spectral aggregation. A spectral social aggregation evaluates $X$ by averaging its quantile function with respect to a social spectrum $\mu$,
\[
  W_\mu(X)=\int_0^1 Q_u[X]d\mu(u).
\]
The social spectrum records how much social attention is assigned to each quantile level. Representative quantiles, finite weighted averages of individual quantiles, expected values, and tail-average criteria are all special cases. Spectral aggregation is therefore a minimal extension of representative-quantile aggregation that can track several represented ranks simultaneously.

Our main result is an exact support theorem. Weak Pareto holds for a spectral social aggregation if and only if the social spectrum is supported on the represented set $T(\boldsymbol\tau)=\{\tau_1,\ldots,\tau_n\}$. Strong Pareto holds if and only if, in addition, every represented level receives positive mass. Thus, Pareto consistency is equivalent to a no-phantom-level principle. It also explains the fragility of representative-quantile aggregation: weak Pareto forces the representative level to be one of the individual levels, while strong Pareto is possible only when all individuals have the same quantile level.

We also study three further implications. First, a small off-support mass creates only small Pareto-indifference violations, and any positive off-support mass can generate a violation of proportional size. Second, finite spectral aggregation provides a simple Pareto-compatible rule for heterogeneous committees, while a weaker threshold version of Pareto permits representative levels inside the interval spanned by the individual levels. Third, on local benchmark-affine domains, and in particular for portfolio committees with elliptical common-shape payoffs, the finite spectrum can sometimes be summarized by a single representative quantile. Outside such domains, the finite social spectrum remains the object directly justified by Pareto consistency.

The structure of the paper is as follows. Section \ref{sec:setup} introduces quantile preferences, social functional, and the rank-based axioms leading to spectral aggregation. Section \ref{sec:main-result} proves the Pareto support theorem and develops its consequences for representative quantile, finite spectral aggregation, and a weaken Pareto criteria. Section \ref{sec:local-domain-investment} studies local benchmark-affine domains and an application on the portfolio-committee under elliptical common-shape payoffs. Section \ref{sec:conclusion} gives a short conclusion and outlines further lines of research. All proofs are collected in the Appendix.

\subsection{Related literature}
\label{subsec:related-literature}
The first related strand is the theory of quantile preferences. While the concept of quantile maximization was initially introduced by \citet{Manski1988}, its axiomatic foundations were later established by \citet{Chambers2009}, who axiomatized quantiles as distributional functionals, and \citet{Rostek2010}, who axiomatized quantile-maximizing preferences. This framework was subsequently extended by \citet{deCastroGalvao2019} to dynamic models, and by \citet{deCastroGalvao2022} to general static and dynamic settings. Furthermore, a growing body of applied and experimental research demonstrates that quantile levels are crucial drivers of behavior in portfolio selection, laboratory settings, and firm investment \citep{deCastroGalvaoMontesRojasOlmo2022,deCastroGalvaoNoussairQiao2022,deCastroGalvaoKimMontesRojasOlmo2022,AlmeidaCampelloDeCastroGalvao2024}. Building on this foundational work, we take the individual quantile level as the primitive object of heterogeneity, shifting the focus to how these diverse levels can be systematically aggregated in society. 

The second strand addresses social aggregation under risk. The classical benchmark in this domain is \citet{Harsanyi1955}, which demonstrates that applying the Pareto principle in an expected-utility environment inevitably leads to an affine aggregation of individual utilities. In this paper, we obtain a similar result by weakening the Pareto principle. However, in general, since quantile preferences evaluate distributions locally at specific quantile levels, quantile aggregation can lead to a dictatorial aggregator, similar to \citet{Arrow1951} and the broader tradition of impossibility theorems.

Finally, this paper intersects with the broad literature on rank-based welfare and risk measurement. In decision theory, models such as rank-dependent utility, Yaari's dual theory, and Choquet expected utility evaluate acts by weighting outcomes according to their rank and utilizing nonadditive Choquet integration \citep{Yaari1987,Quiggin1982,Schmeidler1989}. In a parallel vein within welfare economics, generalized Gini social welfare functions apply rank-based weighting to positions in an ordered income distribution \citep{DonaldsonWeymark1980,Weymark1981}. Furthermore, in the mathematical finance literature, coherent risk measures provide the foundational axiomatic class for evaluating capital adequacy \citep{ArtznerEtAl1999}. Building on this coherence framework, spectral risk measures and Kusuoka representations characterize risk through weighted mixtures of quantile or tail-average functionals \citep{Acerbi2002,Kusuoka2001}. By synthesizing these diverse concepts, this paper bridges these fields, providing a novel social-choice interpretation of spectral objects tailored specifically to heterogeneous quantile preferences. 
\section{Quantile model}
\label{sec:setup}
\subsection{Quantile function}
Let $(\Omega, \mathcal{F}, \Prob)$ be an atomless probability space, and let $X$ be a real-valued random variable (an act) from $\Omega$ to $\R$, which can be interpreted as the payoff of the risky choice. In this paper, we focus on the set of bounded real-valued random variables, 
\[\X=L^\infty=\{X:X \text{ is a random variable taking values in }\R
  \text{ and } \|X\|_\infty<\infty\},\]
where $
  \|X\|_\infty =\operatorname*{ess\,sup}X$. With a
slight abuse of notation, for every $c \in \R$, we use $c$ to denote the constant act $X_c(\omega) = c$ for all $\omega \in \Omega$.

For a given  $X\in \X$,  its  (lower) quantile of $X$ is defined as 
\[
  Q_\tau[X]=
  \begin{cases}
  \inf\{x\in\R:F_X(x)\ge \tau\}, & \text{if } \tau\in(0,1],\\
  \sup\{x\in\R:F_X(x)=0\}, & \text{if } \tau=0,
  \end{cases}
\]
where $F_X$ is the cumulative distribution function of $X$, that is $F_X(x)=\Prob(X\le x)$. Note that,  at the endpoints, $Q_0[X]=\operatorname*{ess\,inf}X$ and $Q_1[X]=\operatorname*{ess\,sup}X$, so that the quantile assumes a value in the support of $X$. For convenience, throughout the paper we will focus on $\tau \in (0, 1)$, unless explicitly stated otherwise.

It's clear that, for every $X\in \X$, the quantile function $Q_\tau [X]$ belongs to the set of bounded, weakly increasing, and left-continuous functions on $(0,1)$, that is
\[
  \Qinf
  :=\{q:[0,1]\to\R:
  q \text{ is bounded, weakly increasing, and left-continuous on }(0,1)\}.
\]
Conversely, every $q\in\Qinf$ can be realized, up to the usual endpoint convention, by a bounded real-valued random variable $X \in \X$.

%
%
%

\begin{lemma}\label{lem:construction}
For every continuous weakly increasing $q:[0,1]\to\R$, there exists a bounded real-valued random variable $X\in\X$ such that
\[
  Q_u[X]=q(u)
  \qquad
  \forall u\in(0,1).
\]
\end{lemma}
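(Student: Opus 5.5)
The plan is to use the standard quantile transform. Because $(\Omega,\mathcal F,\Prob)$ is atomless, it supports a random variable $U$ that is uniformly distributed on $(0,1)$. Set $X:=q(U)$. Since $q$ is continuous on the compact interval $[0,1]$, it is bounded, so $\|X\|_\infty\le \max_{[0,1]}|q|<\infty$ and $X\in\X$. Measurability is immediate, since $q$ is continuous and hence Borel.

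The key step is to identify $F_X$ and then invert it. For $x\in\R$, monotonicity and continuity of $q$ imply that the set $\{u\in[0,1]: q(u)\le x\}$ is a closed interval $[0,s(x)]$, where $s(x)=\sup\{u\in[0,1]:q(u)\le x\}$. This set is empty when $x<q(0)$, in which case we set $s(x)=0$. It follows that $F_X(x)=\Prob(U\le s(x))=s(x)$ whenever the set is nonempty, and $F_X(x)=0$ otherwise.

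Now fix $\tau\in(0,1)$. We must show $\inf\{x:F_X(x)\ge\tau\}=q(\tau)$. For the upper bound, take $x=q(\tau)$. Then $\tau$ lies in $\{u:q(u)\le x\}$, so $s(x)\ge\tau$ and $F_X(q(\tau))\ge\tau$. Hence $Q_\tau[X]\le q(\tau)$. For the lower bound, take any $x<q(\tau)$. If $u\ge\tau$, then $q(u)\ge q(\tau)>x$ by monotonicity, so $\{u:q(u)\le x\}\subseteq[0,\tau)$. Since this set is closed, it is contained in $[0,\tau']$ for some $\tau'<\tau$, or it is empty. Therefore $F_X(x)\le\tau'<\tau$, and so $Q_\tau[X]\ge q(\tau)$. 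This closedness argument, which uses continuity of $q$ to keep flat pieces and jumps from breaking the inversion, is the only delicate point. Everything else is routine.

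Combining the two bounds gives $Q_u[X]=q(u)$ for all $u\in(0,1)$. The endpoint values are not required by the statement, but they would also match: $Q_0[X]=q(0)$ and $Q_1[X]=q(1)$.

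The existence of $U$ on an atomless space is a standard fact, which I would cite rather than prove. Alternatively, one can build $U$ from a sequence of independent fair bits, which an atomless space supports.
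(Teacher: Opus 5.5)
Your proof is correct and follows the paper's argument: the same construction $X=q(U)$ and the same two-sided bound. Your closedness of the sublevel set $\{u:q(u)\le x\}$ plays the role of the paper's observation that $q(v)>x$ on a left-neighborhood of $u$ when $x<q(u)$, and the only cosmetic difference is that you test $x=q(\tau)$ directly for the upper bound where the paper uses all $x>q(u)$.
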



\subsection{Quantile preferences}
\label{subsec:individual-ranks}
A well-known and important property of quantiles is its invariance with respect to monotonic transformations. More formally, if $\phi: \R \to \R$ is continuous and strictly increasing, then
\begin{equation}
Q_\tau [\phi(X)] = \phi(Q_\tau[X])
\label{quantile_transform}
\end{equation}
A preference $\succsim$ over random variables is a $\tau$-quantile preference (QP) for some fixed $\tau\in(0,1)$ if
\begin{equation}
X \succsim Y \iff Q_\tau[u(X)] \geq Q_\tau[u(Y)]
\label{quantile_pref_uti}
\end{equation}
where $u(\cdot)$ is the utility function which is continuous and strictly increasing.

It is worth noticing that, by (\ref{quantile_transform}), the QP defined by (\ref{quantile_pref_uti}) are independent of the utility function.\footnote{Indeed, $X \succsim Y \iff Q_\tau[u(X)] \geq Q_\tau[u(Y)] \iff u(Q_\tau[X]) \geq u(Q_\tau[Y]) \iff Q_\tau[X] \geq Q_\tau[Y]$.}
Therefore, the QP in (\ref{quantile_pref_uti}) can be defined as
\begin{equation}
X \succsim Y \iff Q_\tau[X] \geq Q_\tau[Y]
\label{quantile_pref}
\end{equation}
and the quantile level $\tau$ can be used to represent the taste or the behavior of individual with quantile preferences. Lower values of $\tau$ represent more risk-averse agents.\footnote{See Section 5 in \cite{deCastroGalvao2022} for more discussion.}

\subsection{Social aggregation of quantile levels}
\label{subsec:social-functionals}

A social aggregation is just a map $W:L^\infty \to\R$ that satisfies some \textit{social} properties later. We focus on a special class of social aggregation called spectral aggregation. That is, for a given Borel probability measure $\mu$ on $[0,1]$, define
\[
  W_\mu(X)=\int_0^1 Q_\tau[X]d\mu(\tau),
\]
whenever the integral is finite. The measure $\mu$ is the social spectrum. It records how much social attention is assigned to each quantile level of the outcome distribution. This spectral class contains many interesting classes. A representative quantile is the special case $\mu=\delta_{\tau_0}$; a finite weighted average of individual quantiles corresponds to a discrete measure; expected value and tail-average criteria are also spectral functionals under the usual integrability conditions.

More importantly, the spectral aggregation  is not
introduced merely as a convenient mathematical class. It has a natural behavior foundation, that is, it is  derived from a set of intuitive axioms below.
\begin{axiom}[Law invariance]\label{Law_inva} If $X$ and $Y$ have the same distribution, then $W(X)=W(Y)$.
\end{axiom}
\begin{axiom}[Statewise monotonicity] If $X\ge Y$ almost surely, then $W(X)\ge W(Y)$.
\label{State_mono}
\end{axiom}
\begin{axiom}[Constant invariance] For every constant $c\in\R$, $W(X+c)=W(X)+c$.
\label{Constant_inva}
\end{axiom}
\begin{axiom}[Comonotonic additivity] If $X$ and $Y$ are comonotonic, then $W(X+Y)=W(X)+W(Y)$.
\label{Como_add}
\end{axiom}
\begin{axiom}[Lower-quantile regularity on events] If $A_m\uparrow A$ and $\Prob(A)<1$, then $
    W(\1_{A_m})\longrightarrow W(\1_A)$.
    \label{Event_regu}
    \end{axiom}
Here $X$ and $Y$ are comonotonic if
\[
  (X(\omega)-X(\omega'))(Y(\omega)-Y(\omega'))\ge0
\]
for almost every pair $(\omega,\omega')$.\footnote{In the atomless setting, they may be represented as increasing transformations of a common random variable: there are a random variable $Z$ and weakly increasing functions $f,g$ such that $X=f(Z)$ and $Y=g(Z)$ almost surely.}

\begin{theorem}[Spectral representation]
\label{thm:axiomatic-spectral}
Let $W$ be social aggregation function satisfying Axioms \ref{Law_inva}-\ref{Event_regu} on the space of bounded real-valued random variables $\X=L^\infty$. Then there exists a unique Borel probability measure $\mu_W$ on $[0,1]$ such that
\[
  W(X)=\int_0^1 Q_u[X]d\mu_W(u)
\]
for every bounded real-valued random variable $X\in\X$.
\end{theorem}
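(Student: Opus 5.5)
The plan is to follow the classical route to Choquet/Kusuoka-type representations of law-invariant, comonotonic additive functionals, in three stages: a distortion function built from indicators, a Choquet integral on simple acts, and the passage from the distortion to $\mu_W$.

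\textbf{Step 1: a distortion from indicators.} Define $g(p)=W(\1_A)$ for any event $A$ with $\Prob(A)=p$. By law invariance (Axiom \ref{Law_inva}) this depends only on $p$, and the atomless space lets us realize every $p\in[0,1]$. Constant invariance with $X=0$ gives $W(c)=c$, so $g(0)=W(0)=0$ and $g(1)=W(1)=1$. Statewise monotonicity, applied to nested events, shows that $g$ is weakly increasing.

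\textbf{Step 2: Choquet form on simple acts.} Take a simple act $X=\sum_k c_k\1_{A_k}$. Write it in layer-cake form
\[
X=x_1+\sum_{j\ge2}(x_j-x_{j-1})\1_{\{X\ge x_j\}},
\]
where $x_1<\dots<x_m$ are its values. All the summands are comonotonic, being increasing functions of $X$. Comonotonic additivity then gives
\[
W(X)=x_1+\sum_j (x_j-x_{j-1})\,g(\Prob(X\ge x_j)).
\]
This uses positive homogeneity on comonotonic nonnegative multiples. That homogeneity follows for rationals from additivity, and for reals from monotonicity.

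\textbf{Step 3: extension to all of $L^\infty$.} Every bounded $X$ is a uniform limit of simple acts $X_n$ with $X_n\le X\le X_n+1/n$. Monotonicity and constant invariance give $|W(X)-W(X_n)|\le 1/n$. Hence $W(X)=\int_0^\infty g(\Prob(X>t))dt$ (suitably shifted for negative parts), i.e. a Choquet integral with respect to $g\circ\Prob$.

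\textbf{Step 4: from the distortion to $\mu_W$.} Rewrite the Choquet integral via the quantile function, using $\Prob(Q_U[X]>t)$-type identities with $U$ uniform. Set $\mu_W([0,u)) := 1-g(1-u)$. We need $\mu_W$ to be a Borel probability measure, and this is exactly where Axiom \ref{Event_regu} enters.
\begin{itemize}
\item The lower-quantile convention puts mass at $u$ on levels where $Q_u$ reads the \emph{lower} side of a jump. This requires $p\mapsto g(p)$ to be left-continuous on $[0,1)$, i.e. $W(\1_{A_m})\to W(\1_A)$ for $A_m\uparrow A$ with $\Prob(A)<1$.
\item Left-continuity of $g$ makes $u\mapsto 1-g(1-u)$ left-continuous, hence the distribution function of a unique Borel probability measure on $[0,1]$.
\item The endpoint $p=1$ is excluded from the axiom. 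This allows a jump of $g$ at $1$, which corresponds to an atom of $\mu_W$ at $0$ (the essential infimum $Q_0$).
\end{itemize}
One then checks on an indicator that $\int Q_u[\1_A]d\mu_W(u)=\mu_W((1-p,1])=g(p)$. This agrees with $W$ on indicators, hence on simple acts by the same comonotonic decomposition, and on $L^\infty$ by uniform approximation, since both sides are $1$-Lipschitz in $\|\cdot\|_\infty$.

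\textbf{Uniqueness.} The indicator values $g(p)=\mu_W((1-p,1])$ for all $p$ determine $\mu_W$ on a $\pi$-system generating the Borel sets.

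\textbf{Main obstacle.} The hard step is matching the continuity and endpoint conventions. The left- versus right-continuity of $g$ must correspond exactly to the lower quantile and its convention at $0$ and $1$, so that the measure is well defined and reproduces $g$ on every indicator, including the boundary levels $u=0$ and $u=1$. I would first verify carefully that $\int Q_u[\1_A]\,d\mu_W$ equals $g(\Prob(A))$ for every $p\in[0,1]$, including the boundary cases $p=0,1$, and adjust the definition of $\mu_W$ at the atoms accordingly.
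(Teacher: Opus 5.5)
Your strategy works and takes a genuinely different route from the paper's. However, Step~4 as written gets the continuity direction backwards, at exactly the point you flagged as the main obstacle.

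\textbf{The error in Step 4.} If $u\downarrow u_0$ then $1-u\uparrow 1-u_0$, so left-continuity of $g$ makes $G(u):=1-g(1-u)$ \emph{right}-continuous, not left-continuous. Hence $G$ can only serve as $\mu_W([0,u])$, not as $\mu_W([0,u))$.

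Your definition would give $\mu_W([1-p,1])=g(p)$. This contradicts the identity $\mu_W((1-p,1])=g(p)$ you use afterwards. It also fails for any spectrum with an interior atom: for $\mu=\delta_{1/2}$ one has $g(p)=\1\{p>1/2\}$, so $G(1/2)=1$ while $\delta_{1/2}([0,1/2))=0$.

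\textbf{The fix.} Set $\mu_W([0,u]):=1-g(1-u)$ for $u\in(0,1]$. This forces $\mu_W(\{0\})=1-\lim_{p\uparrow 1}g(p)$, which is your atom at $Q_0$. It also gives $\mu_W((1-p,1])=g(p)$ for every $p\in[0,1)$. Together with $g(1)=1=\mu_W([0,1])$, this matches $\int_0^1 Q_u[\1_A]\,d\mu_W(u)$ for all $p\in[0,1]$.

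After this correction the rest goes through: both sides agree on simple acts by comonotonic additivity, then on $L^\infty$ by uniform approximation, and uniqueness follows from indicator values. Step~3 is then not even needed.

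\textbf{A smaller slip.} $W(0)=0$ comes from comonotonic additivity applied to $0+0$. Constant invariance alone only gives $W(c)=W(0)+c$.

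\textbf{Comparison with the paper.} The paper never builds a distortion. It sets $\mathcal W(q)=W(q(U))$ on $C^1$ increasing curves and extends this by differences to a bounded positive linear functional on $C[0,1]$. It then obtains $\mu_W$ at once from the Riesz theorem. Axiom~\ref{Event_regu} is used only to identify $W(\1_{\{U>s\}})=\mu_W((s,1])$, by sandwiching ramp functions. Arbitrary quantile curves are then handled by step approximation.

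Your route is more elementary, with no Riesz theorem and no density argument. It also makes transparent what Axiom~\ref{Event_regu} encodes. Left-continuity of $g$ on $(0,1)$ is exactly the lower-quantile convention, and the absence of any constraint at $p=1$ is what permits an atom at $u=0$. Uniqueness falls out directly on the $\pi$-system of upper intervals.

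The price is that you must build the measure from $g$ by hand and get the open/closed endpoint conventions right yourself, which is where the slip occurred. In the paper's argument the measure comes out of Riesz, and these conventions are handled automatically by the ramp-sandwich identification.
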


\section{Aggregation results}
\label{sec:main-result}
There are $n$ individuals. Each individual $i$ has her own quantile level $\tau_i\in(0,1)$ and she evaluates random variables according to QP in (\ref{quantile_pref}), i.e.,
\[
  X\succsim_i Y
  \quad\Longleftrightarrow\quad
  Q_{\tau_i}[X]\ge Q_{\tau_i}[Y].
\]
A profile of quantile levels is denoted by $\boldsymbol\tau=(\tau_1,\dots,\tau_n)$, and the set of represented quantile levels is $T(\boldsymbol\tau)=\{\tau_1,\dots,\tau_n\}$. The set $T(\boldsymbol\tau)$ removes labels and multiplicities. Multiplicity matters later through the amount of social mass assigned to a quantile level, but Pareto support itself is a restriction on the set of quantile levels that are represented in society.

We recall the standard weak and strong Pareto principles for studying the aggregation of quantile levels.

\begin{definition}[Weak and strong Pareto]
\label{def:pareto}
A social functional $W$ satisfies \textbf{weak} Pareto at profile $\boldsymbol\tau$ if for every $X, Y \in \X$,
\[
  X\succsim_i Y
  \quad\forall i
  \quad\Longrightarrow\quad
  W(X)\ge W(Y).
\]
It satisfies \textbf{strong} Pareto at profile $\boldsymbol\tau$ if, whenever the weak inequalities hold and at least one of them is strict, then $W(X)>W(Y)$.
\end{definition}

A quantile level outside $T(\boldsymbol\tau)$ is called a phantom level for the profile. Such a level can enter a spectral social evaluation even though no individual reads the distribution there. The support results below show that Pareto consistency is exactly the condition that rules out phantom levels, with strong Pareto requiring positive social mass on every represented quantile level.

\subsection{Spectral Pareto support}
\label{sec:pareto-support}
\label{subsec:exact-support}

Having identified spectral aggregations as the relevant  quantile-linear social rules, we now ask which spectra are compatible with the individual quantile levels $(\tau_i)_{i=1,\ldots,n}$. The answer is a support condition. Weak Pareto allows social mass only on represented levels, while strong Pareto additionally requires every represented level to receive positive mass.

\begin{theorem}[Spectral Pareto support]
\label{thm:spectral-support}
Let $W_\mu$ be a spectral social aggregation with the social spectrum $\mu$. Then we have
\begin{enumerate}[label=(\roman*)]
  \item $W_\mu$ satisfies weak Pareto at $\boldsymbol\tau$ if and only if
  \[
    \mu\bigl([0,1]\setminus T(\boldsymbol\tau)\bigr)=0.
  \]
  \item $W_\mu$ satisfies strong Pareto at $\boldsymbol\tau$ if and only if it satisfies the support condition above and
  \[
    \mu(\{\alpha\})>0
    \qquad
    \forall \alpha\in T(\boldsymbol\tau).
  \]
\end{enumerate}
\end{theorem}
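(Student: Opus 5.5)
The plan is to handle sufficiency by direct integration and necessity by explicit constructions built with Lemma~\ref{lem:construction}, which realizes any continuous weakly increasing $q$ as a quantile function on $(0,1)$.

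\emph{Sufficiency.} If $\mu([0,1]\setminus T(\boldsymbol\tau))=0$, then $\mu$ is the discrete measure $\sum_{\alpha\in T(\boldsymbol\tau)}\mu(\{\alpha\})\delta_\alpha$. Hence
\[
W_\mu(X)-W_\mu(Y)=\sum_{\alpha\in T(\boldsymbol\tau)}\mu(\{\alpha\})\bigl(Q_\alpha[X]-Q_\alpha[Y]\bigr).
\]
If every individual weakly prefers $X$, each bracket is $\ge0$, which gives weak Pareto. If, in addition, every $\alpha\in T(\boldsymbol\tau)$ has positive mass and some individual strictly prefers $X$, then the strict bracket at her level enters with a positive weight. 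This gives strong Pareto.

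\emph{Necessity for (i).} Suppose $\mu(U)>0$, where $U=[0,1]\setminus T(\boldsymbol\tau)$. Then $U$ is a finite union of open or half-open intervals, namely the complement of finitely many points. By countable additivity, one of the intervals between consecutive represented levels (or the pieces $[0,\tau_{(1)})$ or $(\tau_{(n)},1]$) carries positive mass. Inner regularity then gives a compact $K\subset(a,b)$ with $\mu(K)>0$, where $(a,b)$ is such an open component. Take $Y=0$ and $X$ realized by a continuous weakly increasing $q$, via Lemma~\ref{lem:construction}, with $q=0$ outside $(a,b)$.

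The difficulty is that monotonicity forbids a bump. A nonnegative increasing $q$ that vanishes to the right of the gap must vanish on the gap as well, so on an interior gap I cannot make $X$ strictly better only there while staying equal elsewhere. The fix is to use a negative perturbation and reverse the roles. I choose $q$ continuous and increasing with $q=0$ on $[b,1]$, $q<0$ on $[0,b)$, and $q$ flat at level $0$ only from $b$ on. This makes $Q_\alpha[X]<0$ at represented levels below $b$, so that plan fails as stated.

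Instead I will use two acts that are both nonconstant. Let $Y$ have quantile $q_Y$ and $X$ have quantile $q_X=q_Y-h$, where $h\ge0$ is a continuous tent supported in $[c,d]\subset(a,b)$ with $h>0$ on $K$. I pick $q_Y$ strictly increasing with slope large enough that $q_Y-h$ stays weakly increasing, for instance $q_Y(u)=Lu$ with $L\ge\operatorname{Lip}(h)$. Then $Q_{\tau_i}[X]=Q_{\tau_i}[Y]$ for all $i$, so $X\sim_i Y$ for everyone, while
\[
W_\mu(X)-W_\mu(Y)=-\int h\,d\mu<0.
\]
By symmetry of indifference, weak Pareto demands $W(X)\ge W(Y)$, a contradiction. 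For the boundary pieces $[0,\tau_{(1)})$ and $(\tau_{(n)},1]$, I will use the same tent construction. If positive mass sits only at the endpoints $0$ or $1$, I will use $h$ supported near the endpoint and continuous on $[0,1]$, and I need to check that the endpoint convention $Q_0=\operatorname{ess\,inf}$ and $Q_1=\operatorname{ess\,sup}$ agrees with $q(0)$ and $q(1)$ for the construction. I expect this endpoint handling to be the main technical obstacle, and I will verify it from the proof of Lemma~\ref{lem:construction}, taking $X=q(V)$ with $V$ uniform.

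\emph{Necessity of the positive-mass condition in (ii).} Strong Pareto implies weak Pareto, so the support condition holds. Suppose $\mu(\{\alpha\})=0$ for some $\alpha\in T(\boldsymbol\tau)$. Using the same slope trick, take $q_X=q_Y+h$ with $h\ge0$ a tent supported in a small interval around $\alpha$ that contains no other represented level and with $h(\alpha)>0$. Then individuals at level $\alpha$ strictly prefer $X$, all others are indifferent, and $W_\mu(X)=W_\mu(Y)$, because $\mu$ is concentrated on $T(\boldsymbol\tau)$ where $h$ vanishes except at $\alpha$, which has zero mass. This contradicts strong Pareto.
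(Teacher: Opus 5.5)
Your proof is correct and follows the paper's architecture. Sufficiency comes from the finite sum over $T(\boldsymbol\tau)$. Necessity uses a linear base quantile curve plus a perturbation that vanishes on $T(\boldsymbol\tau)$, kept monotone by a dominating slope. The positivity condition in (ii) uses a bump at a zero-mass represented level.

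The difference is the test function. The paper takes the single polynomial $h(u)=-\prod_{\tau\in T(\boldsymbol\tau)}(u-\tau)^2$. It is strictly negative on all of $[0,1]\setminus T(\boldsymbol\tau)$, so $\int h\,d\mu<0$ follows at once from $\mu([0,1]\setminus T(\boldsymbol\tau))>0$. That one function handles interior gaps, boundary pieces, and mass sitting only at $0$ or $1$. So your case split and the appeal to inner regularity are unnecessary. The endpoint check $Q_0[q(V)]=q(0)$ and $Q_1[q(V)]=q(1)$ for continuous $q$ is needed in either version, and it is immediate from continuity. Your localized tent on a compact $K$ is essentially what the paper uses for Proposition~\ref{thm:approx-support}(ii), where it buys quantitative control of the violation.

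One step needs a line of justification. Definition~\ref{def:pareto}, as written, states strong Pareto only as the strict clause, so ``strong Pareto implies weak Pareto'' is not true by definition. It does hold for spectral rules, and your construction gives it directly. Take your tent $h\ge0$ with $q_X=q_Y-h$, so that $Q_{\tau_i}[X]=Q_{\tau_i}[Y]$ for all $i$ and $\int h\,d\mu>0$. Choose $0<\varepsilon<\int h\,d\mu$. Then every individual strictly prefers $X+\varepsilon$ to $Y$, yet $W_\mu(X+\varepsilon)-W_\mu(Y)=\varepsilon-\int h\,d\mu<0$. This violates strong Pareto directly, and is simpler than the paper's $h+a h_\beta$ construction. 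Finally, drop the abandoned $Y=0$ attempt from the write-up.
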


The exact support theorem is sharp, but applications often use estimated spectra. The following proposition quantifies the cost of placing small mass on phantom levels, turning the support theorem into a stability result. The off-support mass of $W_\mu$ at $\boldsymbol\tau$ is
\[
  \varepsilon_W(\boldsymbol\tau)
  :=\mu\bigl([0,1]\setminus T(\boldsymbol\tau)\bigr).
\]
\begin{proposition}[Approximate Pareto support]
\label{thm:approx-support}
Let $W_\mu$ be a spectral social functional with social spectrum $\mu$. Then the following statements hold.
\begin{enumerate}[label=(\roman*)]
  \item If $X,Y\in\X$ satisfy $Q_{\tau_i}[X]=Q_{\tau_i}[Y]$ for every $i$, then
  \[
    |W_\mu(X)-W_\mu(Y)|
    \le
    \varepsilon_W(\boldsymbol\tau)
    \max\{\osc(Q_\cdot[X]),\osc(Q_\cdot[Y])\},
  \]
  where $\osc(Q_\cdot[X])=\sup_{u\in[0,1]}Q_u[X]-\inf_{u\in[0,1]}Q_u[X]$.
  \item If $\varepsilon_W(\boldsymbol\tau)>0$, then for every $\eta>0$ there exist $X,Y\in\X$ such that $Q_{\tau_i}[X]=Q_{\tau_i}[Y]$ for every $i$, but
  \[
    |W_\mu(X)-W_\mu(Y)|
    \ge
    c\bigl(\varepsilon_W(\boldsymbol\tau)-\eta\bigr)
  \]
  for some constant $c>0$.
\end{enumerate}
\end{proposition}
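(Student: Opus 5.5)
For part (i), I split the spectral integral according to whether the level is represented or phantom. Write $S=[0,1]\setminus T(\boldsymbol\tau)$. Then
\[
  W_\mu(X)-W_\mu(Y)=\sum_{\alpha\in T(\boldsymbol\tau)}\mu(\{\alpha\})\bigl(Q_\alpha[X]-Q_\alpha[Y]\bigr)+\int_S \bigl(Q_u[X]-Q_u[Y]\bigr)\,d\mu(u).
\]
By hypothesis the first sum vanishes. The second term is an integral over a set of mass $\varepsilon_W(\boldsymbol\tau)$, so it suffices to bound $|Q_u[X]-Q_u[Y]|$ uniformly on $S$ by $\max\{\osc(Q_\cdot[X]),\osc(Q_\cdot[Y])\}$.

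To get that bound, I anchor both quantile functions at a common represented level, say $\tau_1$, where $Q_{\tau_1}[X]=Q_{\tau_1}[Y]=:c$. For every $u$, both $Q_u[X]$ and $c$ lie in the range $[\inf Q_\cdot[X],\sup Q_\cdot[X]]$, and likewise for $Y$. Suppose $Q_u[X]\ge Q_u[Y]$. If $u\ge\tau_1$, monotonicity gives $Q_u[Y]\ge c$, so
\[
  Q_u[X]-Q_u[Y]\le Q_u[X]-c\le \osc(Q_\cdot[X]).
\]
If $u<\tau_1$, then $Q_u[X]\le c$, so $Q_u[X]-Q_u[Y]\le c-Q_u[Y]\le\osc(Q_\cdot[Y])$. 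The case $Q_u[Y]\ge Q_u[X]$ is symmetric. This gives the pointwise bound and hence (i). The only subtlety is the endpoints $0,1$ and measurability, which are routine since quantile functions are monotone and therefore Borel.

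For part (ii), I construct $X,Y$ whose quantile functions agree at each $\tau_i$ but differ by a fixed amount on most of $S$. The set $T(\boldsymbol\tau)$ is finite, so $S$ is a finite union of intervals. By inner regularity, given $\eta>0$ I choose a compact set $K\subset S$, and then a finite union of closed intervals $J$ at positive distance from every $\tau_i$, with $\mu(J)\ge\varepsilon_W(\boldsymbol\tau)-\eta$. I also put $0,1$ in $T$-free territory as needed.

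Next I build continuous weakly increasing functions $q_X\ge q_Y$ on $[0,1]$ with the following properties:
\begin{itemize}
  \item $q_X=q_Y$ at every $\tau_i$;
  \item $q_X-q_Y\ge 1$ on $J$;
  \item the two functions are continuous between these points.
\end{itemize}
The natural choice is $q_Y(u)=u$ and $q_X(u)=u+h(u)$, where $h\ge0$ is a continuous bump equal to $0$ at each $\tau_i$ and equal to $\delta$ on $J$.

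The main obstacle is keeping $q_X$ weakly increasing. A decreasing part of $h$ must be offset by the slope of $q_Y$, so $h$ must have slope at least $-1$. Since $J$ has positive distance $d$ from $T(\boldsymbol\tau)$, I take $h(u)=\min\{\delta,\operatorname{dist}(u,T(\boldsymbol\tau))\}$ with $\delta=d$. This $h$ is $1$-Lipschitz, so $q_X$ is weakly increasing, and $h=\delta$ on $J$. By Lemma~\ref{lem:construction}, both $q_X$ and $q_Y$ are realized by some $X,Y\in\X$.

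Then the represented levels contribute nothing, $h\ge0$ everywhere, and $h=\delta$ on $J$, so
\[
  W_\mu(X)-W_\mu(Y)=\int h\,d\mu\ge \delta\,\mu(J)\ge \delta\bigl(\varepsilon_W(\boldsymbol\tau)-\eta\bigr).
\]
This gives (ii) with $c=\delta$.

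The constant $c$ depends on $\eta$ through $d$. The statement permits this ("for some constant $c>0$"), but I would note it explicitly. Alternatively, one can rescale to make the oscillation proportional and state the bound relative to $\osc$, which matches the sharpness of (i). If the endpoints $0$ or $1$ carry mass in $S$, the same bump construction works there, since $Q_0$ and $Q_1$ equal the ess inf and ess sup, which coincide with $q(0)$ and $q(1)$ for continuous $q$.
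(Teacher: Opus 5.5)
Your proposal is correct and follows essentially the paper's route: in (i) you restrict the spectral integral to the off-support set and bound the integrand by anchoring both monotone quantile curves at a common represented level, which is a step the paper only asserts. In (ii) you add to a linear quantile curve a bump that vanishes on $T(\boldsymbol\tau)$ and is constant on a large-mass compact set; you use the $1$-Lipschitz function $\min\{d,\operatorname{dist}(\cdot,T(\boldsymbol\tau))\}$, where the paper uses a $C^1$ Urysohn-type function together with the steeper base slope $\gamma=c(m_h+1)$ to keep monotonicity. Your caveat that $c$ depends on $\eta$ is unnecessary: multiplying both quantile curves by any $\lambda>0$ scales the gap by $\lambda$, so any prescribed $c>0$ is attainable, and the paper's $c$ is exactly such a free scale.
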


Proposition \ref{thm:approx-support} identifies off-support mass as the maximal source of Pareto-indifference violations. Small off-support mass creates only small violations, while positive off-support mass always permits a violation of proportional size after a small perturbation.

\subsection{Consequences of spectral support}
This subsection gives the main finite-population consequences of the spectral support theorem containing a representative social function and weighted-linear social function.

We begin with single representative social function, where social aggregation function is represented by $ W(X)=Q_{\tau_0}[X]$ for some $\tau_0 \in (0,1)$. The following result is a direct consequence of Theorem \ref{thm:spectral-support} with $\mu=\delta_{\tau_0}$ where $\delta_{\tau_0}$ is a Dirac measure.

\begin{corollary}\label{cor:single-rank}
Suppose society is represented by a single quantile level $\tau_0$.

\begin{enumerate}[label=(\roman*)]
  \item Weak Pareto holds if and only if $\tau_0\in T(\boldsymbol{\tau})$.
    \item Strong Pareto holds if and only if $    T(\boldsymbol{\tau})=\{\tau_0\}$.
\end{enumerate}
\end{corollary}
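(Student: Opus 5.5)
The plan is to apply Theorem \ref{thm:spectral-support} with the Dirac measure $\mu=\delta_{\tau_0}$, for which $W_\mu(X)=Q_{\tau_0}[X]$. Both parts then reduce to elementary facts about the support and atoms of a Dirac measure.

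For part (i), the weak Pareto criterion in Theorem \ref{thm:spectral-support}(i) is $\delta_{\tau_0}([0,1]\setminus T(\boldsymbol\tau))=0$. This quantity equals $0$ if $\tau_0\in T(\boldsymbol\tau)$ and equals $1$ otherwise. So weak Pareto holds if and only if $\tau_0\in T(\boldsymbol\tau)$.

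For part (ii), Theorem \ref{thm:spectral-support}(ii) requires two things:
\begin{itemize}
\item the support condition, which by part (i) means $\tau_0\in T(\boldsymbol\tau)$;
\item $\delta_{\tau_0}(\{\alpha\})>0$ for every $\alpha\in T(\boldsymbol\tau)$.
\end{itemize}
Since $\delta_{\tau_0}(\{\alpha\})>0$ exactly when $\alpha=\tau_0$, the second condition says every represented level equals $\tau_0$, that is, $T(\boldsymbol\tau)\subseteq\{\tau_0\}$. The set $T(\boldsymbol\tau)$ is nonempty because $n\ge1$. Combined with $\tau_0\in T(\boldsymbol\tau)$, this gives $T(\boldsymbol\tau)=\{\tau_0\}$. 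Conversely, if $T(\boldsymbol\tau)=\{\tau_0\}$, both conditions hold trivially.

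There is no real obstacle here. The only point needing care is the nonemptiness of $T(\boldsymbol\tau)$ in part (ii), and all substantive work is already contained in Theorem \ref{thm:spectral-support}, which we are taking as given.
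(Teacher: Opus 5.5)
Your proposal is correct and matches the paper's argument: the paper obtains this corollary by specializing Theorem \ref{thm:spectral-support} to $\mu=\delta_{\tau_0}$ and reading off the support and atom conditions for a Dirac measure. Your remark about nonemptiness is harmless but redundant, since the support condition $\tau_0\in T(\boldsymbol\tau)$ already gives $\{\tau_0\}\subseteq T(\boldsymbol\tau)$.
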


Corollary \ref{cor:single-rank} identifies the central failure of representative aggregation. Weak Pareto does not merely require the social quantile level to lie between the most pessimistic and most optimistic individuals. It requires society to select one of the individual quantile levels exactly. Strong Pareto is even more restrictive: it requires all individual quantile levels coincide.

Another natural alternative of social function is the  weighted-linear aggregation, in which we aggregate several quantile levels simultaneously. More precisely, there are weights $\lambda_i \geq 0$ and $\sum_{i=1}^n \lambda_i=1$ such that
\[
  W_{\boldsymbol{\lambda}}(X)
  =
  \sum_{i=1}^n \lambda_i Q_{\tau_i}[X].
\]
For each realized quantile level $\alpha\in T(\boldsymbol{\tau})$, define the total weight assigned to $\alpha$ by
\[
  \Lambda(\alpha\mid\boldsymbol{\tau})
  =
  \sum_{i:\tau_i=\alpha}\lambda_i.
\]
The following corollary is directly obtained by taking   $\mu=\sum_{i=1}^n \lambda_i\delta_{\tau_i}$ in Theorem \ref{thm:spectral-support}.

\begin{corollary}
\label{cor:finite-multirank}
The weighted-linear functional $W_{\boldsymbol{\lambda}}$ satisfies weak Pareto. It satisfies strong Pareto if and only if
\[
  \Lambda(\alpha\mid\boldsymbol{\tau})>0
  \qquad
  \forall \alpha\in T(\boldsymbol{\tau}).
\]
\end{corollary}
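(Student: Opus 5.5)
Corollary \ref{cor:finite-multirank} follows by applying Theorem \ref{thm:spectral-support} to the discrete spectrum $\mu=\sum_{i=1}^n \lambda_i\delta_{\tau_i}$. The first step is to check that $W_{\boldsymbol\lambda}=W_\mu$. For any $X\in\X$ the quantile function is bounded, so the integral is finite, and integrating against a finite sum of Dirac masses gives
\[
  \int_0^1 Q_u[X]\,d\mu(u)=\sum_{i=1}^n\lambda_i Q_{\tau_i}[X].
\]
Because $\lambda_i\ge0$ and $\sum_i\lambda_i=1$, $\mu$ is a Borel probability measure, so the hypotheses of Theorem \ref{thm:spectral-support} are met.

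For the weak Pareto claim, I will verify the support condition of Theorem \ref{thm:spectral-support}(i). Every atom of $\mu$ sits at some $\tau_i\in T(\boldsymbol\tau)$, so
\[
  \mu\bigl([0,1]\setminus T(\boldsymbol\tau)\bigr)=\sum_{i:\tau_i\notin T(\boldsymbol\tau)}\lambda_i=0,
\]
the sum being empty. Part (i) then gives weak Pareto at $\boldsymbol\tau$ for every choice of weights. Alternatively, this can be checked directly: if $Q_{\tau_i}[X]\ge Q_{\tau_i}[Y]$ for all $i$, a nonnegative combination of these inequalities preserves the order.

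For the strong Pareto claim, the support condition has just been established, so Theorem \ref{thm:spectral-support}(ii) reduces strong Pareto to $\mu(\{\alpha\})>0$ for every $\alpha\in T(\boldsymbol\tau)$. The only computation needed is the atom mass. Several individuals may share a level, and the $\tau_i$ are points of $(0,1)$, so
\[
  \mu(\{\alpha\})=\sum_{i=1}^n\lambda_i\delta_{\tau_i}(\{\alpha\})=\sum_{i:\tau_i=\alpha}\lambda_i=\Lambda(\alpha\mid\boldsymbol\tau).
\]
The condition of part (ii) is therefore exactly $\Lambda(\alpha\mid\boldsymbol\tau)>0$ for all represented $\alpha$, which is the statement.

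There is no genuine obstacle, since all the substantive content lies in Theorem \ref{thm:spectral-support}. The only point needing care is the bookkeeping of multiplicities. Strong Pareto concerns distinct represented levels, not individuals, so an individual with $\lambda_i=0$ is harmless whenever another individual with the same level carries positive weight. This is why the condition is stated through $\Lambda$ rather than through each $\lambda_i$ separately.
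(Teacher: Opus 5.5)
Your proposal is correct and follows the paper's route exactly: the paper obtains this corollary by specializing Theorem~\ref{thm:spectral-support} to $\mu=\sum_{i=1}^n\lambda_i\delta_{\tau_i}$. Your computations of the zero off-support mass and of the atom masses $\mu(\{\alpha\})=\Lambda(\alpha\mid\boldsymbol\tau)$ supply precisely the bookkeeping the paper leaves implicit.
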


This aggregation function resolves the problem faced by representative aggregation. Instead of choosing one social quantile level, society places weight on the levels actually represented among the individuals. The only Pareto restriction is that strong Pareto requires every realized quantile level to receive positive total mass.

\subsection{A possibility result}
\label{subsec:threshold-pareto}

Theorem \ref{thm:spectral-support} relies on full pairwise Pareto comparisons between arbitrary acts $X$ and $Y$. This condition is quite demanding; in a representative case, it leads to a dictatorial aggregator.  In this subsection, we study weaker Pareto conditions that no longer force society to select a dictator.

\begin{definition}[C-Pareto]
\label{def:threshold-pareto}
A social functional $W:\X\to\R$ satisfies \textbf{C-Pareto with lower threshold} at $\boldsymbol\tau$ if, for every $X\in\X$ and every constant $c\in\R$,
\[
  X \succsim_i c \quad \forall i
  \qquad\Longrightarrow\qquad
  W(X)\ge c.
\]
It satisfies \textbf{C-Pareto with upper threshold} at $\boldsymbol\tau$ if, for every $X\in\X$ and every constant $c\in\R$,
\[
  c\succsim_i X \quad \forall i
  \qquad\Longrightarrow\qquad
  c\ge W(X).
\]
It satisfies \textbf{C-Pareto} if it satisfies both conditions.
\end{definition}

These weaker conditions are naturally interpreted as acceptance-set conditions in the theory of coherent risk measures. In that literature, a risk measure is often studied in terms of the set of positions regarded as acceptable, and the standard coherence axioms impose structural restrictions on these acceptance sets \citep{ArtznerEtAl1999}. The present condition is weaker and more ordinal: it does not require subadditivity or positive homogeneity. It only requires that the social investor's cash-acceptance set contain the intersection of the experts' cash-acceptance sets. Moreover, deciding whether an action is acceptable relative to a constant act is typically easier than comparing two arbitrary acts. Thus, this new Pareto principle is less restrictive than the standard Pareto principle above.\footnote{For related constant-act Pareto and caution conditions in expert aggregation, see Section 3 of \cite{bachdong24}.}

This weaker form of Pareto requires only that the social quantile remain within the interval spanned by the individual quantiles, i.e., in the interval $[\underline\tau,\overline\tau]$ where $\underline\tau:=\min_{1\le i\le n}\tau_i$ and $\overline\tau:=\max_{1\le i\le n}\tau_i$.

\begin{theorem}\label{thm:threshold-support-interval}
Let $W_\mu$ be a spectral social aggregation with the social spectrum $\mu$. Then we have
\begin{enumerate}[label=(\roman*)]
  \item $W_\mu$ satisfies C-Pareto with lower threshold at $\boldsymbol\tau$ if and only if $
    \mu([\underline\tau,1])=1.$
  \item $W_\mu$ satisfies C-Pareto with upper threshold at $\boldsymbol\tau$ if and only if $
    \mu([0,\overline\tau])=1.$ 
  \item $W_\mu$ satisfies C-Pareto at $\boldsymbol\tau$ if and only if $ 
    \mu([\underline\tau,\overline\tau])=1.$
\end{enumerate}
\end{theorem}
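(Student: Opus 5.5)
We prove (i) directly; (ii) follows by the mirror argument, and (iii) by combining them, since $\mu([\underline\tau,1])=\mu([0,\overline\tau])=1$ if and only if $\mu([\underline\tau,\overline\tau])=1$.

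\emph{Sufficiency in (i).} Suppose $\mu([\underline\tau,1])=1$ and $Q_{\tau_i}[X]\ge c$ for all $i$. In particular $Q_{\underline\tau}[X]\ge c$. Since $u\mapsto Q_u[X]$ is weakly increasing, $Q_u[X]\ge c$ for every $u\in[\underline\tau,1]$. Integrating against $\mu$, which lives on this set, gives $W_\mu(X)\ge c$.

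\emph{Sufficiency in (ii).} Suppose $\mu([0,\overline\tau])=1$ and $Q_{\tau_i}[X]\le c$ for all $i$. Then $Q_u[X]\le Q_{\overline\tau}[X]\le c$ for $u\le\overline\tau$. The endpoint $u=0$ is harmless, because $Q_0[X]=\operatorname*{ess\,inf}X\le Q_{\overline\tau}[X]$ still holds under the paper's convention.

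\emph{Necessity in (i).} Suppose $\mu([0,\underline\tau))=:\varepsilon>0$. We need an $X$ with $Q_{\tau_i}[X]\ge 0$ for all $i$ but $W_\mu(X)<0$. Choose $\delta>0$ small with $\mu([0,\underline\tau-\delta])>0$; this is possible because $[0,\underline\tau)=\bigcup_\delta[0,\underline\tau-\delta]$ and $\mu$ is continuous from below. By Lemma~\ref{lem:construction}, take $X$ with continuous quantile function
\[
q(u)=-M\bigl(1-\min\{(u-\underline\tau+\delta)^+/\delta,\,1\}\bigr)+\kappa,
\]
that is, $q(u)=-M+\kappa$ on $[0,\underline\tau-\delta]$, rising linearly to $\kappa$ at $\underline\tau$, and constant equal to $\kappa$ afterward. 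Every $\tau_i\ge\underline\tau$, so $Q_{\tau_i}[X]=\kappa\ge 0$ for $\kappa\ge 0$. Meanwhile
\[
W_\mu(X)\le\kappa-M\,\mu([0,\underline\tau-\delta]).
\]
Taking $\kappa=0$ and $M=1$ gives $W_\mu(X)<0=c$, which contradicts lower-threshold C-Pareto.

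\emph{Necessity in (ii).} Symmetrically, if $\mu((\overline\tau,1])>0$, pick $\delta$ with $\mu([\overline\tau+\delta,1])>0$. Use a continuous increasing $q$ equal to $0$ on $[0,\overline\tau]$, rising linearly to $1$ at $\overline\tau+\delta$, and constant afterward. Then $Q_{\tau_i}[X]=0\le c:=0$ for all $i$, while $W_\mu(X)\ge\mu([\overline\tau+\delta,1])>0$.

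One point needs care: $W_\mu$ involves $Q_1$ and $Q_0$ at the endpoints, whereas Lemma~\ref{lem:construction} only matches $q$ on $(0,1)$. For the continuous $q$ above, the lemma's construction $X=q(U)$ with $U$ uniform gives $Q_0[X]=q(0)$ and $Q_1[X]=q(1)$, because $q$ is continuous and increasing. So atoms of $\mu$ at $0$ or $1$ are handled correctly. We expect this endpoint bookkeeping, together with choosing a closed subinterval of positive mass, to be the only delicate step; the rest is monotonicity of quantiles.
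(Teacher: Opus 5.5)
Your proof is correct and follows the paper's argument. For sufficiency you use monotonicity of $u\mapsto Q_u[X]$. For necessity you use a quantile curve equal to $-M$ on some $[0,a]$ with $a<\underline\tau$ and flat on $[\underline\tau,1]$ (mirrored by a bump above $\overline\tau$), and you get (iii) by intersecting the two full-measure sets. Your continuous piecewise-linear curve, together with the endpoint check for atoms of $\mu$ at $0$ or $1$, makes explicit what the paper leaves to a ``monotone interpolation''. You also rightly note that any $M>0$ suffices, whereas the paper asks for $M$ large enough.
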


In the financial market, each expert provides a quantile-based acceptance test for investments. The two C-Pareto conditions require the investor to respect expert judgments: if all experts accept (or reject) an investment against a sure payoff, so must the investor. The lower-threshold condition rules out social mass below $\underline\tau$, so the investor cannot be more pessimistic than the most pessimistic expert. The upper-threshold condition rules out social mass above $\overline\tau$, so the investor cannot be more optimistic than the most optimistic expert. Hence, the investor's quantile spectrum should lie between the least and greatest expert quantile levels. Within this interval, the investor may smooth, average, or otherwise distribute attention.

Theorem \ref{thm:threshold-support-interval} is also related to Kusuoka's representation of law-invariant coherent risk measures. In Kusuoka's representation, a law-invariant coherent risk measure satisfying the Fatou property is represented as a supremum over mixtures of tail-value-at-risk, or tail-average, functionals (see Theorem 4 in \cite{Kusuoka2001}). Kusuoka's result asks: given coherence and law invariance, what representation does a risk measure have? The present theorem asks instead: given expert quantile judgments and Pareto, where may the investor's quantile levels lie? The answer is not that all coherent levels are allowed. The investor's spectrum must be inside the interval of expert quantile levels: $\operatorname{supp}(\mu)\subseteq [\underline\tau,\overline\tau]$. Thus, Theorem \ref{thm:threshold-support-interval} is different but complementary to Kusuoka's result.
  
\section{Local Domains and Investment Applications}
\label{sec:local-domain-investment}

The previous spectral support theorem is intentionally a rich-domain result: when the risky choice $X$ can vary freely, the Pareto principle forces representative aggregation to select from the represented levels. This section studies the case in which the domain of risky choices is only \textit{locally} restricted. More precisely, the relevant comparisons are concentrated near particular decision-relevant candidates, rather than ranging over the entire distribution.

This local perspective is useful in many applied environments. For example, a bank may care mainly about losses near a regulatory capital or value-at-risk threshold; an insurer may focus on claims near a solvency cutoff; and a university or employer may compare applicants close to an admission or hiring margin. In such cases, alternatives need not share a global location--scale structure. It is enough that they align locally around the thresholds or candidates that drive the decision. We provide one concrete application in Subsection~\ref{subsec:portfolio-committees}, where, under elliptical returns, a Pareto-supported committee spectrum can be implemented as a single mean--standard-deviation investment mandate.

\subsection{A local benchmark-affine domain}
\label{subsec:local-z-affine}
Let $Z$ be a benchmark random variable with strictly increasing continuous quantile function $z(u)=Q_u[Z]$ for $u\in(0,1)$. Fix a set of relevant quantile levels $\mathcal T\subset(0,1)$. A family $\mathcal D$ of strictly increasing continuous quantile functions is called \emph{$Z$-local on $\mathcal T$} if it contains all linear transformations of $z$, i.e., $\{a+bz(\cdot):a\in\R,\ b>0\}$, and for every $q\in\mathcal D$, there exist $a_q\in\R$ and $b_q>0$ such that $q(\tau)=a_q+b_qz(\tau)$ for all $\tau\in \mathcal T$. The associated local domain is 
\[\X_{\mathcal D}=\{q(U):q\in\mathcal D\}, \quad U\sim\operatorname{Unif}(0,1).\]
This local domain contains all linear transformations of benchmark $Z$, but it also contains quantile curves with different skewness, tail thickness, or curvature away from $\mathcal T$. What is fixed is only the benchmark-affine geometry at the quantile coordinates that matter for the Pareto principle. This is why, on this local domain, the social quantile level is not a dictatorial level, but uniquely determined by all represented quantile levels.

\begin{theorem}[Local-domain spectral aggregation]
\label{thm:local-domain-spectral}
Fix a profile $\boldsymbol\tau=(\tau_1,\dots,\tau_n)$ and let $W_\mu$ be a spectral social aggregation with the social spectrum $\mu$. Suppose $\mu((0,1))=1$ and $\mathcal D$ is $Z$-local on a set $\mathcal T= T(\boldsymbol\tau)\cup\supp(\mu)\subset(0,1)$. Then, the following conditions are equivalent:
\begin{enumerate}[label=(\roman*)]
  \item $W_\mu$ satisfies weak Pareto at $\boldsymbol\tau$ on $\X_{\mathcal D}$;
  \item $W_\mu$ satisfies C-Pareto at $\boldsymbol\tau$ on $\X_{\mathcal D}$;
  \item $ \kappa_\mu
    \in
    \operatorname{co}\{z(\alpha):\alpha\in T(\boldsymbol\tau)\}$, where  $\kappa_\mu =  \int_{(0,1)}z(u)\,d\mu(u)$.
\end{enumerate}
Moreover, $W_\mu$ satisfies strong Pareto at $\boldsymbol\tau$ on $\X_{\mathcal D}$ if and only if
\[
  \kappa_\mu\in
  \operatorname{ri}\!\left(\operatorname{co}\{z(\alpha):\alpha\in T(\boldsymbol\tau)\}\right).
\]
\end{theorem}
The support of social spectrum $\mu$ identifies the quantile levels that enter the spectral aggregation, while the benchmark $Z$ gives these levels a common payoff scale. Hence $\kappa_\mu$ is the benchmark location of the social spectrum: it summarizes where the spectral aggregation lies on the benchmark quantile curve. In particular, $  \kappa_\mu \in \operatorname{co}\{z(u):u\in\supp(\mu)\}$. The Pareto conditions in Theorem~\ref{thm:local-domain-spectral} then say that this benchmark location must lie inside the benchmark interval spanned by the represented quantile levels, namely $\operatorname{conv}\{z(\alpha):\alpha\in T(\boldsymbol\tau)\}$. Thus Pareto consistency on the local domain does not require the social spectrum to be supported only on represented quantile levels; it requires the benchmark location of the social spectrum to fall within the benchmark-affine span of the represented quantile levels.

More interestingly, if $\kappa_\mu\in z((0,1))$, define $\tau_\mu^Z=F_Z(\kappa_\mu)$, then
\[
  W_\mu(X)=Q_{\tau_\mu^Z}[X]
  \qquad
  \forall X\in\X_{\mathcal D},
\]
provided that $\mathcal D$ is $Z$-local on $\mathcal T\cup\{\tau_\mu^Z\}$. Thus, on the local domain, a general spectral rule can be  represented by a representative rule. 

We end this subsection by providing a corollary directly obtained from Theorem~\ref{thm:local-domain-spectral} by taking $\mu=\delta_{\tau_0}$, for which $\kappa_\mu=z(\tau_0)$. 

\begin{corollary}\label{cor:local-domain-pareto}
Fix a profile $\boldsymbol\tau=(\tau_1,\dots,\tau_n)$, let $\tau_0\in(0,1)$ be a social quantile level. Suppose $\mathcal D$ is $Z$-local on $\mathcal T=T(\boldsymbol\tau)\cup\{\tau_0\}$. Then the representative rule $W_0(X)=Q_{\tau_0}[X]$ satisfies weak Pareto (or C-Pareto) at $\boldsymbol\tau$ on $\X_{\mathcal D}$ if and only if $ z(\tau_0)\in \operatorname{co}\{z(\alpha):\alpha\in T(\boldsymbol\tau)\}$. Consequently, there are weights $\lambda_i\ge0$ with $\sum_i\lambda_i=1$ such that 
\[  \tau_0=F_Z\!\left(\sum_{i=1}^n\lambda_iQ_{\tau_i}[Z]\right).
\]
Moreover, $W_0$ satisfies strong Pareto if and only if $  z(\tau_0)
  \in
  \operatorname{ri}\!\left(\operatorname{co}\{z(\alpha):\alpha\in T(\boldsymbol\tau)\}\right)$.
\end{corollary}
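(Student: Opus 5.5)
The plan is to specialize Theorem~\ref{thm:local-domain-spectral} to the Dirac spectrum $\mu=\delta_{\tau_0}$. Since $\tau_0\in(0,1)$, we have $\mu((0,1))=1$ and $\supp(\mu)=\{\tau_0\}$. Hence the set $\mathcal T=T(\boldsymbol\tau)\cup\supp(\mu)$ in that theorem is exactly $T(\boldsymbol\tau)\cup\{\tau_0\}$, on which $\mathcal D$ is assumed $Z$-local. All hypotheses are therefore met. Moreover $W_\mu=W_0$, since $\int_0^1 Q_u[X]\,d\delta_{\tau_0}(u)=Q_{\tau_0}[X]$, and $\kappa_\mu=\int_{(0,1)} z(u)\,d\delta_{\tau_0}(u)=z(\tau_0)$.

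The equivalence of (i)–(iii) in Theorem~\ref{thm:local-domain-spectral} then gives the first assertion directly: weak Pareto holds on $\X_{\mathcal D}$ iff C-Pareto holds on $\X_{\mathcal D}$ iff $z(\tau_0)\in\operatorname{co}\{z(\alpha):\alpha\in T(\boldsymbol\tau)\}$. The strong Pareto clause of that theorem likewise gives the relative-interior characterization.

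For the ``consequently'' part, I would unpack the convex hull. Membership $z(\tau_0)\in\operatorname{co}\{z(\alpha):\alpha\in T(\boldsymbol\tau)\}$ means there are nonnegative weights $(\nu_\alpha)_{\alpha\in T(\boldsymbol\tau)}$ summing to one with $z(\tau_0)=\sum_\alpha \nu_\alpha z(\alpha)$. To pass from distinct levels to individuals, pick for each $\alpha$ one index $i$ with $\tau_i=\alpha$, give it weight $\nu_\alpha$, and give all other indices weight zero. This yields $\lambda_i\ge0$ with $\sum_i\lambda_i=1$ and $z(\tau_0)=\sum_i\lambda_i Q_{\tau_i}[Z]$.

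The only delicate step is inverting $z$. Because $z$ is continuous and strictly increasing on $(0,1)$, $Z$ has a continuous, strictly increasing distribution function on the range of $z$, and $F_Z(z(u))=u$ for every $u\in(0,1)$. I would verify this by noting that $F_Z(z(u))=\Prob(z(U)\le z(u))=\Prob(U\le u)=u$, using the representation $Z\overset{d}{=}z(U)$ with $U$ uniform and the strict monotonicity of $z$. Applying $F_Z$ to both sides then gives $\tau_0=F_Z\bigl(\sum_i\lambda_iQ_{\tau_i}[Z]\bigr)$, which completes the argument.
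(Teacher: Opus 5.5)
Your proposal is correct and takes the paper's own route: the corollary is obtained by specializing Theorem~\ref{thm:local-domain-spectral} to $\mu=\delta_{\tau_0}$, so that $\mathcal T=T(\boldsymbol\tau)\cup\{\tau_0\}$ and $\kappa_\mu=z(\tau_0)$. Your treatment of the ``consequently'' clause correctly fills in steps the paper leaves implicit: you assign the convex weights to one representative individual per distinct level, and you use $F_Z(z(u))=u$, which follows from $Z\overset{d}{=}z(U)$ and the strict monotonicity of $z$.
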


Corollary~\ref{cor:local-domain-pareto} establishes that local-domain Pareto consistency generates a Kolmogorov--Nagumo quasi-arithmetic mean on quantile levels. Rather than averaging social quantile levels directly, the aggregation procedure works in three steps: transform to the benchmark scale, average there, then transform back. The benchmark distribution $Z$ thus serves as the generator of social compromise. It equips the unit interval of quantile levels with a benchmark-affine geometry, enabling Pareto-consistent compromise to be linear on the benchmark quantile curve—not on the probability scale.

\subsection{Portfolio committees and investment mandates}
\label{subsec:portfolio-committees}

This subsection applies the local-domain theorem to a portfolio committee. Each member is represented by a quantile preference at a quantile level $\tau_i$, while the institution chooses portfolio weights from a feasible mandate set. The purpose is to show that, when feasible portfolio payoffs have an elliptical common-shape structure, the benchmark-affine geometry in Subsection~\ref{subsec:local-z-affine} becomes the usual mean--scale geometry of portfolio theory. The committee's finite social spectrum can then be written as a single representative quantile rule and, equivalently, as a mean--standard-deviation mandate.

\paragraph{Committee aggregation.}
Let $R$ be the vector of asset returns or terminal payoffs, and let the committee choose portfolio weights $w$ from a feasible set $B$. The set $B$ may encode budget balance, no-short-sale restrictions, leverage limits, tracking-error constraints, liquidity requirements, capital requirements, or institutional screens. The payoff generated by $w$ is
\[
  X_w=w'R.
\]

Individual $i$ evaluates $X_w$ by the quantile functional $Q_{\tau_i}[X_w]$. Lower quantile levels, such as $0.05$, emphasize bad states and can be interpreted as solvency or shortfall criteria. Central quantile levels, such as $0.50$, represent typical performance. Upper quantile levels, such as $0.90$, emphasize upside states. Let $\lambda_i\ge0$ with $\sum_i\lambda_i=1$. These weights may represent voting weights, fiduciary shares, assets represented, or the anonymous benchmark $\lambda_i=1/n$.

The induced finite spectral social aggregation is
\[
  W_{\mu_\lambda}(X_w)
  =
  \sum_{i=1}^n\lambda_iQ_{\tau_i}[X_w],
  \qquad
  \mu_\lambda=
  \sum_{i=1}^n\lambda_i\delta_{\tau_i}.
\]
For notational simplicity, write $W_\lambda(w):=W_{\mu_\lambda}(X_w)$. Since $\mu_\lambda$ is supported on $T(\boldsymbol\tau)$, Corollary~\ref{cor:finite-multirank} implies weak Pareto. If every represented quantile level receives positive total weight, then the same corollary gives strong Pareto.

\paragraph{Elliptical common-shape structure.}
We now impose a portfolio-domain restriction. This restriction is stronger than specifying only a mean vector and covariance matrix. It requires a common elliptical generator, so that changing $w$ may change the location and scale of $X_w$, but not the standardized shape of its quantile function. Formally, suppose
\[
  R\sim EC_m(\bar R,\Sigma,\psi),
\]
meaning that the characteristic function of $R$ has the form
\[
  \varphi_R(t)=\exp(i t'\bar R)\,\psi(t'\Sigma t),
  \qquad t\in\R^m,
\]
for a common generator $\psi$. Then every linear payoff satisfies
\[
  X_w\overset d= w'\bar R+\sqrt{w'\Sigma w}\,Z_0,
  \qquad w\in B,
\]
where the scalar benchmark shock $Z_0$ does not depend on $w$. If $w'\Sigma w=0$, the payoff is degenerate and the formula is interpreted with zero scale. We assume that $F_{Z_0}$ is continuous and strictly increasing and that $Q_{\tau_i}[Z_0]$ is finite for every represented quantile level.

This common-shape condition is the linear-projection property of elliptically contoured distributions. The distributional foundations are given by \citet{Kelker1970} and \citet{CambanisHuangSimons1981}; standard monograph treatments include \citet{FangKotzNg1990} and \citet{GuptaVargaBodnar2013}. In portfolio theory, the condition is closely related to the mean--variance and elliptical-distribution literature, including \citet{Chamberlain1983}, \citet{OwenRabinovitch1983}, and \citet{Berk1997}. Related formulas for tail risk under elliptical distributions are provided by \citet{LandsmanValdez2003}, with broader risk-management treatments in \citet{McNeilFreyEmbrechts2015}.

Let $z_0(u)=Q_u[Z_0]$. For every interior quantile level $u$,
\[
  Q_u[X_w]
  =
  w'\bar R+
  \sqrt{w'\Sigma w}\,z_0(u).
\]
Thus the feasible portfolio domain is contained in the location--scale family generated by the benchmark $Z_0$. In particular, it is $Z_0$-local on every finite set of relevant quantile levels. Hence the local-domain result in Subsection~\ref{subsec:local-z-affine} applies directly.

\begin{proposition}[Elliptical committee reduction]
\label{thm:committee-elliptical-reduction}
Assume the elliptical common-shape structure above. For committee weights $\lambda_i\ge0$ with $\sum_i\lambda_i=1$, define
\[
  \kappa_\lambda
  =
  \int_{(0,1)} z_0(u)\,d\mu_\lambda(u)
  =
  \sum_{i=1}^n\lambda_iQ_{\tau_i}[Z_0],
  \qquad
  \tau_\lambda=F_{Z_0}(\kappa_\lambda).
\]
Then, for every feasible portfolio $w\in B$,
\[
  W_\lambda(w)
  =
  \sum_{i=1}^n\lambda_iQ_{\tau_i}[X_w]
  =
  Q_{\tau_\lambda}[X_w]
  =
  w'\bar R+
  \kappa_\lambda\sqrt{w'\Sigma w}.
\]
Consequently, maximizing the Pareto-supported finite spectrum over $B$ is equivalent to solving the single mandate
\[
  \max_{w\in B}
  \left\{w'\bar R+
  \kappa_\lambda\sqrt{w'\Sigma w}\right\}.
\]
\end{proposition}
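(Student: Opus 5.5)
The proof is a direct computation built on the location--scale quantile formula stated just before the proposition, namely $Q_u[X_w]=w'\bar R+\sqrt{w'\Sigma w}\,z_0(u)$ for interior $u$. We justify it first. Since $X_w\overset d= w'\bar R+\sigma_w Z_0$ with $\sigma_w=\sqrt{w'\Sigma w}\ge0$, law invariance of quantiles lets us work with the right-hand side. When $\sigma_w>0$, the map $x\mapsto w'\bar R+\sigma_w x$ is continuous and strictly increasing, so the transformation rule (\ref{quantile_transform}) gives the formula. When $\sigma_w=0$, the payoff is the constant $w'\bar R$, every quantile equals that constant, and the formula again holds because the scale term vanishes. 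Note that $z_0(\tau_i)$ is finite for each represented level by assumption.

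Next comes the spectral identity. Averaging the quantile formula against $\mu_\lambda=\sum_i\lambda_i\delta_{\tau_i}$ and using $\sum_i\lambda_i=1$ gives
\[
W_\lambda(w)=\sum_i\lambda_i\bigl(w'\bar R+\sigma_w z_0(\tau_i)\bigr)=w'\bar R+\kappa_\lambda\sigma_w .
\]
This already proves the outer equality and the mandate reformulation, since $W_\lambda(w)$ and the mandate objective agree pointwise on $B$ and therefore have the same maximizers and the same value.

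The remaining step is the middle equality $Q_{\tau_\lambda}[X_w]=w'\bar R+\kappa_\lambda\sigma_w$, which is where care is needed. Because $\kappa_\lambda$ is a convex combination of the values $z_0(\tau_i)$, it lies in $[z_0(\underline\tau),z_0(\overline\tau)]\subset z_0((0,1))$. Since $F_{Z_0}$ is continuous and strictly increasing, $z_0$ is its genuine inverse on $(0,1)$. Hence $\tau_\lambda=F_{Z_0}(\kappa_\lambda)$ lies in $[\underline\tau,\overline\tau]\subset(0,1)$ and satisfies $z_0(\tau_\lambda)=\kappa_\lambda$. We then apply the first step at the interior level $u=\tau_\lambda$ to get $Q_{\tau_\lambda}[X_w]=w'\bar R+\sigma_w z_0(\tau_\lambda)=w'\bar R+\kappa_\lambda\sigma_w$.

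The main obstacle is verifying $z_0(F_{Z_0}(\kappa))=\kappa$ with the lower-quantile convention. Continuity of $F_{Z_0}$ gives $F_{Z_0}(\kappa)=\tau$. Strict monotonicity rules out a flat stretch of $F_{Z_0}$ at height $\tau$, so $\inf\{x:F_{Z_0}(x)\ge\tau\}=\kappa$. Alternatively, the whole conclusion follows from Theorem~\ref{thm:local-domain-spectral} and the remark after it, since the portfolio domain is $Z_0$-local on $T(\boldsymbol\tau)\cup\{\tau_\lambda\}$. The direct computation above avoids that detour.
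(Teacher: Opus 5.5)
Your proposal is correct and follows essentially the same route as the paper: the location--scale quantile formula, then averaging against $\mu_\lambda$, then identifying $Q_{\tau_\lambda}[Z_0]=\kappa_\lambda$ from the continuity and strict monotonicity of $F_{Z_0}$. You also fill in details the paper asserts without comment, namely the degenerate case $w'\Sigma w=0$, the fact that $\tau_\lambda\in[\underline\tau,\overline\tau]\subset(0,1)$, and the lower-quantile inversion $z_0(F_{Z_0}(\kappa))=\kappa$.
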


The coefficient $\kappa_\lambda$ is the benchmark location of the committee's social spectrum, in the same sense as $\kappa_\mu$ in Theorem~\ref{thm:local-domain-spectral}. It is determined by the represented quantile levels and the committee weights. The market inputs $\bar R$, $\Sigma$, and the benchmark distribution $Z_0$ enter separately. When $\kappa_\lambda<0$, the mandate penalizes portfolio scale; when $\kappa_\lambda>0$, it rewards upside scale; and when $\kappa_\lambda=0$, the aggregation reduces to mean payoff on the elliptical domain.

\paragraph{Normal benchmark.}
If $Z_0$ is standard normal with distribution function $\Phi$, then
\[
  \kappa_\lambda=
  \sum_i\lambda_i\Phi^{-1}(\tau_i),
  \qquad
  \tau_\lambda=
  \Phi(\kappa_\lambda).
\]
For three equally weighted committee members, the conversion is
\begin{center}
\begin{tabular}{lccc}
\toprule
committee quantile levels & $\kappa_\lambda$ & $\tau_\lambda$ & mandate interpretation \\
\midrule
$(0.05,0.10,0.50)$ & $-0.976$ & $0.165$ & downside-oriented \\
$(0.05,0.50,0.90)$ & $-0.121$ & $0.452$ & nearly balanced, mildly defensive \\
$(0.50,0.75,0.90)$ & $0.652$ & $0.743$ & upside-oriented \\
\bottomrule
\end{tabular}
\end{center}

The table illustrates that, under the normal benchmark, committee aggregation is linear on the normal-score scale rather than on the probability scale: with equal weights, $\tau_\lambda=\Phi\!\left(\frac{1}{3}\sum_{i=1}^3\Phi^{-1}(\tau_i)\right)$. Hence the resulting mandate is governed by the sign and magnitude of $\kappa_\lambda$. The profile $(0.05,0.10,0.50)$ gives $\kappa_\lambda=-0.976$ and $\tau_\lambda=0.165$, producing a downside-oriented mandate that penalizes portfolio scale; $(0.05,0.50,0.90)$ gives $\kappa_\lambda=-0.121$ and $\tau_\lambda=0.452$, producing a nearly balanced but mildly defensive mandate; and $(0.50,0.75,0.90)$ gives $\kappa_\lambda=0.652$ and $\tau_\lambda=0.743$, producing an upside-oriented mandate that rewards scale. Thus the representative level is not the arithmetic average of committee quantile levels, but a benchmark-scale compromise that exactly implements the finite spectral rule under the elliptical common-shape condition.

\section{Conclusion and further research}
\label{sec:conclusion}

This paper develops a social-choice framework for aggregating quantile preferences. The main message is a support principle for social spectra: weak Pareto permits social mass only on the represented quantile levels, and strong Pareto additionally requires that every represented level receive positive mass. This no-phantom-level principle makes representative-quantile aggregation a fragile special case: a single social quantile can satisfy Pareto only by selecting one represented level. Finite spectral aggregation is therefore the natural Pareto-supported rule for heterogeneous quantile levels. The local-domain results show that, on special local domains,  the finite spectrum can be compressed into a single representative quantile.

Further research can move in several directions. One direction is to combine quantile-level aggregation with heterogeneous beliefs while keeping rank aggregation and belief pooling conceptually distinct. A second is the study of strategic elicitation, when individuals can misreport their quantile levels. Finally, relaxing the elliptical common-shape restriction would clarify when a representative quantile is an accurate reduction and when the finite Pareto-supported spectrum is indispensable.

\appendix
\section{Appendix}
\label{app:proofs}

\subsection{Proof of Lemma \ref{lem:construction}}
\begin{proof}[Proof of Lemma \ref{lem:construction}]
Let $U$ be uniformly distributed on $(0,1)$ and define $X=q(U)$. Fix $u\in(0,1)$. If $x<q(u)$, continuity and monotonicity imply that $q(v)>x$ on some left-neighborhood of $u$, and monotonicity also gives $q(v)>x$ for all $v\ge u$. Hence
\[
  \Prob(X\le x)=\Prob(q(U)\le x)<u.
\]
If $x>q(u)$, then $q(v)<x$ for every $v\le u$, so
\[
  \Prob(X\le x)\ge \Prob(U\le u)=u.
\]
Therefore the lower quantile $\inf\{x:\Prob(X\le x)\ge u\}$ is exactly $q(u)$. Since $u$ was arbitrary, $Q_u[X]=q(u)$ for all $u\in(0,1)$.
\end{proof}

\subsection{Proof of Theorem \ref{thm:axiomatic-spectral}}
\begin{proof}[Proof of Theorem \ref{thm:axiomatic-spectral}]
We use a direct Riesz-representation argument on smooth quantile curves and then pass to arbitrary bounded quantile curves by step approximation. The only nonsmooth step uses the event-regularity axiom on tail indicators.

Fix a uniformly distributed random variable $U$ on $(0,1)$, which exists because the probability space is atomless.

Comonotonic additivity applied to the pair $(0,0)$ gives $W(0)=0$. Hence constant invariance gives $W(c)=c$ for all $c\in\R$. Statewise monotonicity and constant invariance imply that $W$ is one-Lipschitz in the sup norm. Indeed, if $\|X-Y\|_\infty\le\varepsilon$, then $Y-\varepsilon\le X\le Y+\varepsilon$, and therefore
\[
  W(Y)-\varepsilon
  =W(Y-\varepsilon)
  \le W(X)
  \le W(Y+\varepsilon)
  =W(Y)+\varepsilon .
\]
Thus, one has $|W(X)-W(Y)|\le \|X-Y\|_\infty$.

Comonotonic additivity and this continuity also imply positive homogeneity. For integers $n\ge1$, repeated comonotonic additivity gives $W(rY)=rW(Y)$ for all rational number $r\geq 0$. The one-Lipschitz property extends this identity to every real $r\ge0$.

Now define the set of weakly increasing continuous functions on $[0,1]$ as
\[
  \Q:=C^1_\uparrow([0,1])
  =\{q\in C^1[0,1]:q \text{ is weakly increasing}\}
\]
and for $q\in\Q$, define $\mathcal W(q):=W(q(U))$.

Law invariance makes this definition well-defined. If $q\ge r$ pointwise, then $q(U)\ge r(U)$ almost surely, so monotonicity gives $\mathcal W(q)\ge\mathcal W(r)$. Also, its's easy to see that $\mathcal W$ satisfies constant invariance, i.e., $\mathcal W(q+c\1)=\mathcal W(q)+c$ for all $c\in\R$ where $\1$ is the constant function takes value $1$. If $q,r\in\Q$, then $q(U)$ and $r(U)$ are comonotonic, since both are increasing functions of the same random variable $U$. Therefore, $\mathcal W$ satisfies additivity, i.e., $\mathcal W(q+r)=\mathcal W(q)+\mathcal W(r)$.

Now, every $f\in C^1[0,1]$ can be written as a difference of two elements of $\Q$: for example, if $M>\|f'\|_\infty$ and $e(u)=u$, then $f=(f+Me)-Me$, and both $f+Me$ and $Me$ are weakly increasing. And if $f=q-r=q'-r'$ with $q,r,q',r'\in\Q$, then by the additivity of $\mathcal W$, one has
\[
  \mathcal W(q)-\mathcal W(r)
  =\mathcal W(q')-\mathcal W(r'),
\]
so the following definition is well-defined:
\[
  L(f):=\mathcal W(q)-\mathcal W(r)
  \qquad \text{whenever } f=q-r,\ q,r\in\Q.
\]
Using above linearity properties of $\mathcal W$, one shows that $L$ is additive on $C^1[0,1]$, and $L$ agrees with $\mathcal W$ on $\Q$. Using one-Lipchitz property of $\mathcal{W}$, one has
\[
  |L(f)|\le \|f\|_\infty .
\]
Thus $L$ is sup-norm continuous. Since $L$ is additive, this continuity gives homogeneity for all real scalars, so $L$ is linear. Because $C^1[0,1]$ is dense in $C[0,1]$, $L$ extends uniquely to a continuous positive linear functional on $C[0,1]$.

By the Riesz representation theorem, there exists a unique finite Borel measure $\mu_W$ on $[0,1]$ such that
\[
  L(f)=\int_0^1 f(u)\,d\mu_W(u)
  \qquad \forall f\in C[0,1].
\]
It's easy to see that $L$ is positive and $L(\1)=1$. Thus, $\mu_W$ is positive and $\mu_W([0,1])=L(\1)=1$. Hence $\mu_W$ is a Borel probability measure. Since $L(q)=\mathcal W(q)$ for $q\in\Q$, we obtain
\[
  W(q(U))=\int_0^1 q(u)\,d\mu_W(u)
  \qquad \forall q\in\Q.
\]
By uniform approximation of continuous increasing functions by smooth increasing functions, together with the one-Lipschitz property of $W$, the same identity holds for every continuous weakly increasing $q:[0,1]\to\R$. 

The next step identifies the value of $W$ on upper-tail indicators. Fix $s\in(0,1)$ and set $ A_s:=\{U>s\}$. For $\varepsilon>0$ with $s+\varepsilon<1$, define the continuous increasing function
\[
  \phi_{s,\varepsilon}(u)=
  \begin{cases}
    0, & u\le s,\\[1mm]
    \dfrac{u-s}{\varepsilon}, & s<u<s+\varepsilon,\\[2mm]
    1, & u\ge s+\varepsilon .
  \end{cases}
\]
Then $\1_{A_{s+\varepsilon}}
  \le \phi_{s,\varepsilon}(U)
  \le \1_{A_s}$, which implies
\[
  W(\1_{A_{s+\varepsilon}})
  \le W(\phi_{s,\varepsilon}(U))
  \le W(\1_{A_s}).
\]
As $\varepsilon\downarrow0$, the events $A_{s+\varepsilon}$ increase to $A_s$, and $\Prob(A_s)=1-s<1$. Lower-quantile regularity on events therefore gives $W(\1_{A_{s+\varepsilon}})\xrightarrow[\varepsilon\downarrow 0]{} W(\1_{A_s})$. Thus, $W(\phi_{s,\varepsilon}(U))\xrightarrow[\varepsilon\downarrow 0]{} W(\1_{A_s})$. Using bounded convergence, one has
\[
  \int_0^1 \phi_{s,\varepsilon}(u)\,d\mu_W(u)
  \longrightarrow
  \mu_W((s,1]).
\]
Consequently, $W(\1_{A_s})=\mu_W((s,1])$ for all $s\in(0,1)$. It is then straightforward to see that for every increasing step quantile curve
\[
  q(u)=x_0+\sum_{k=1}^m a_k\1_{(s_k,1]}(u),
  \qquad 0<s_k<1,\quad a_k\ge0,
\]
we have $W(q(U))=\int_0^1 q(u)\,d\mu_W(u)$. Thus the representation holds for all finite increasing step quantile curves.

The last step is to extend the formula to arbitrary bounded random variables. Let $X\in L^\infty$ and write $q(u)=Q_u[X]$ for  $0\le u\le1$. By the usual quantile representation, $q(U)$ has the same distribution as $X$. Hence law invariance gives $W(X)=W(q(U))$.

Let $a=q(0)$ and $b=q(1)$. If $a=b$, then $X=a$ almost surely and the formula follows from constant invariance. Suppose $a<b$. For $m\ge1$, set $\Delta_m=(b-a)/m$ and define
\[
  q_m(u)
  =a+\Delta_m\sum_{\ell=1}^{m}
       \1_{\{q(u)>a+\ell\Delta_m\}}.
\]
Because $q$ is a lower-quantile curve, each nonempty set $ \{u:q(u)>a+\ell\Delta_m\}$ is an upper-tail interval $(s_{\ell,m},1]$ with $s_{\ell,m}\in(0,1)$ whenever the threshold $a+\ell\Delta_m$ lies strictly between $a$ and $b$; empty terms are omitted. Hence each $q_m$ is a finite increasing step quantile curve, which follows $W(q_m(U))=\int_0^1 q_m(u)\,d\mu_W(u)$.

Moreover, $ q_m(u)\le q(u)\le q_m(u)+\Delta_m$ for all $u\in[0,1]$ gives 
\[W(q_m(U))\xrightarrow[m \to \infty]{} W(q(U)).\]
And since $q_m\to q$ uniformly and $\mu_W$ is a probability measure, 
\[\int_0^1 q_m(u)\,d\mu_W(u) \xrightarrow[m \to \infty]{}
  \int_0^1 q(u)\,d\mu_W(u).\]
Thus, 
\[
  W(X)=W(q(U))=\int_0^1 Q_u[X]d\mu_W(u).
\]

Finally, uniqueness follows from the Riesz representation theorem.
\end{proof}

\subsection{Proof of Theorem \ref{thm:spectral-support}}
\begin{proof}[Proof of Theorem \ref{thm:spectral-support}]
Assume first that $\mu$ is supported on $T(\boldsymbol\tau)$. Then
\[
  W_\mu(X)-W_\mu(Y)
  =\sum_{\tau\in T(\boldsymbol\tau)}\mu(\{\tau\})
  \bigl(Q_\tau[X]-Q_\tau[Y]\bigr).
\]
If every individual weakly prefers $X$ to $Y$, every term in this finite sum is nonnegative. Hence weak Pareto holds.

Conversely, suppose $\mu([0,1]\setminus T(\boldsymbol\tau))>0$. Define
\[
  h(u)=-\prod_{\tau\in T(\boldsymbol\tau)}(u-\tau)^2.
\]
Then $h$ vanishes at every represented quantile level and is strictly negative outside $T(\boldsymbol\tau)$. Since $\mu$ assigns positive mass outside $T(\boldsymbol\tau)$, we have $\int h(u)d\mu(u)<0$. 

Now consider two random variable $X$ and $Y$ with associated quantile curves $q_Y(u)=u$ and $q_X(u)=u+\varepsilon h(u)$. They exist because for sufficiently small $\varepsilon>0$, $q_X$ remains increasing. All individuals $i$ are indifferent because the perturbation vanishes at their quantile levels. But
\[
  W_\mu(X)-W_\mu(Y)=\varepsilon\int h(u)d\mu(u)<0.
\]
Thus weak Pareto fails. This proves the weak-Pareto equivalence.

For strong Pareto, first suppose that $\mu([0,1]\setminus T(\boldsymbol\tau))>0$. Fix $\beta\in T$ and consider 
\[h_\beta(u)=\prod_{\tau\in T(\boldsymbol\tau)\setminus\{\beta\}}(u-\tau)^2,\]
with the empty product interpreted as one. For sufficiently small $a>0$, $\int (h+a h_\beta)d\mu<0$. 

The perturbation $h+a h_\beta$ is zero at all represented levels except $\beta$, where it is positive. For small $\varepsilon>0$, there exists a random variable $X_\beta$ with its quantile curve $q_{X_\beta}(u)=u+\varepsilon (h(u)+ah_\beta (u))$. All individuals with level $\beta$ strictly prefer $X_\beta$ to $Y$, all other individuals are indifferent, but $W_\mu(X_\beta)-W_\mu(Y)<0$. Hence strong Pareto fails whenever off-support mass is positive.

Now assume that $\mu$ is supported on $T(\boldsymbol\tau)$ but that $\mu(\{\beta\})=0$ for some represented level $\beta\in T(\boldsymbol\tau)$. Choose the same $h_\beta$ as above. For small $\varepsilon>0$, $q_X(u)=u+\varepsilon h_\beta(u)$ is increasing. Individuals at level $\beta$ strictly prefer $X$ to $Y$, and all others are indifferent. But the social value is unchanged because $\beta$ receives zero mass and $h_\beta$ vanishes on $T(\boldsymbol\tau)\setminus\{\beta\}$. Strong Pareto fails.

If, conversely, $\mu$ is supported on $T$ and every represented level has positive mass, any strict individual improvement creates at least one strictly positive term in the finite spectral sum, while all other terms are nonnegative. Hence the social improvement is strict.
\end{proof}

\subsection{Proof of Proposition \ref{thm:approx-support}}
\begin{proof}[Proof of Proposition \ref{thm:approx-support}]
Let $A=[0,1]\setminus T(\boldsymbol\tau)$, and write  $ \varepsilon_W=\mu_W(A)$. By Theorem \ref{thm:axiomatic-spectral}, one has
\[
  W(X)-W(Y)
  =\int_0^1\bigl(Q_u[X]-Q_u[Y]\bigr)d\mu_W(u).
\]

\textit{(i)} Assume $X,Y\in\X$ and $Q_{\tau_i}[X]=Q_{\tau_i}[Y]$ for every $i$.  Since the integrand is zero on the finite set $T(\boldsymbol\tau)$,
\[
  W(X)-W(Y)
  =\int_A\bigl(Q_u[X]-Q_u[Y]\bigr)d\mu_W(u).
\]
Choose any represented rank $\tau\in T(\boldsymbol\tau)$. Let us denote  $\max\{\osc(Q_\cdot[X]),\osc(Q_\cdot[Y])\}=B$, then it's clear that both quantile functions are increasing with oscillation at most $B$. So, for every $u\in[0,1]$, we have
\[
  |Q_u[X]-Q_u[Y]|\le B.
\]
Therefore
\[
  |W(X)-W(Y)|
  \le \int_A |Q_u[X]-Q_u[Y]|d\mu_W(u)
  \le B\mu_W(A)
  =B\varepsilon_W.
\]
\textit{(ii)} Assume $\varepsilon_W>0$. For every $\eta>0$ one can find a compact set $K\subset A$ such that $ \mu_W(K)\ge \varepsilon_W-\eta$. 
Since $T(\boldsymbol\tau)$ is finite and disjoint from $K$, there exists a continuously differentiable function $h:[0,1]\to[0,1]$ such that $h=0$ on $T(\boldsymbol\tau)$ and $h=1$ on $K$. Hence
\[
  \int_0^1 h(u)d\mu_W(u)\ge \varepsilon_W-\eta.
\]
Define $ m_h=\max\left\{0,-\inf_{u\in[0,1]}h'(u)\right\}$ and choose $\gamma=c(m_h+1)$. By constructing, the random variables $\tilde X=q_{\tilde X}(U)$ and $\tilde Y=q_{\tilde Y}(U)$ belong to the main domain $\X$ where $q_{\tilde X}(u)=\gamma u+ch(u)$ and $q_{\tilde Y}(u)=\gamma u$.
Because $h$ vanishes on $T(\boldsymbol\tau)$, $Q_{\tau_i}[\tilde X]=Q_{\tau_i}[\tilde Y] $ for all $i$. However,
\[
  W(\tilde X)-W(\tilde Y)
  =\int_0^1\bigl(q_{\tilde X}(u)-q_{\tilde Y}(u)\bigr)d\mu_W(u)
  =c\int_0^1 h(u)d\mu_W(u)
  \ge c(\varepsilon_W-\eta).
\]
This proves the converse.
\end{proof}

%

\subsection{Proof of Theorem \ref{thm:threshold-support-interval}}
\begin{proof}[Proof of Theorem \ref{thm:threshold-support-interval}]
We prove the lower threshold part; the upper threshold part is symmetric. Suppose first that $\mu([\underline\tau,1])=1$. If $  Q_{\tau_i}[X]\ge c$ for all $i$, then in particular $Q_{\underline\tau}[X]\ge c$. Since the quantile function is increasing in its rank argument, $Q_u[X]\ge c$ for all $u\ge \underline\tau$. Therefore
\[
  W_\mu(X)=\int_{[\underline\tau,1]}Q_u[X]d\mu(u)\ge c.
\]
Thus C-Pareto with lower threshold is true.

Conversely, suppose $\mu([0,\underline\tau))>0$. Then there exists $a<\underline\tau$ such that $\mu([0,a])>0$. Choose a bounded increasing quantile curve $q$ satisfying
\[
  q(u)=-M \quad \text{on }[0,a],
  \qquad
  q(u)=0 \quad \text{on }[\underline\tau,1],
\]
with a monotone interpolation on $(a,\underline\tau)$. WLOG, taking $c=0$ and let $X=q(U)$ for a uniform random variable $U$.  Then every individual satisfies $Q_{\tau_i}[X]\ge0$, while, for $M$ large enough,
\[
  W_\mu(X)=\int_0^1q(u)d\mu(u)<0=c.
\]
This violates lower-threshold C-Pareto. Hence lower-threshold C-Pareto implies the support condition $\mu([\underline\tau,1])=1$.

For C-Pareto with upper threshold, the same argument with a positive bump above $\overline\tau$ shows that the condition is equivalent to $\mu([0,\overline\tau])=1$. Combining the two equivalences gives the two-sided statement.
\end{proof}

\subsection{Proof of Theorem \ref{thm:local-domain-spectral}}
\begin{proof}[Proof of Theorem \ref{thm:local-domain-spectral}]
Let $I_S=\operatorname{co}\{z(\alpha):\alpha\in T(\boldsymbol\tau)\}$.
Since $\mathcal D$ is $Z$-local on $\mathcal T=T(\boldsymbol\tau)\cup\supp(\mu)$, for every $q\in\mathcal D$ there are $a_q\in\R$ and $b_q>0$ such that
\[
  q(u)=a_q+b_qz(u)
  \qquad \forall u\in T(\boldsymbol\tau)\cup\supp(\mu).
\]
Therefore, for $X=q(U)$,
\[
  W_\mu(X)
  =\int q(u)d\mu(u)
  =a_q+b_q\kappa_\mu,
  \qquad
  \kappa_\mu=\int z(u)d\mu(u).
\]
Take $X=q(U)$ and $Y=r(U)$, the individual weak-Pareto inequalities are
\[
  (a_q-a_r)+(b_q-b_r)z(\alpha)\ge0
  \qquad \forall \alpha\in T(\boldsymbol\tau),
\]
and the social inequality is
\[
  (a_q-a_r)+(b_q-b_r)\kappa_\mu\ge0.
\]
If $\kappa_\mu\in I_S$, then $\kappa_\mu$ is a convex combination of the numbers $z(\alpha)$ with $\alpha\in S$, so the individual inequalities imply the social inequality. Conversely, if $\kappa_\mu\notin I_S$, there is an affine function $x\mapsto A+Bx$ that is nonnegative on $I_S$ but negative at $\kappa_\mu$. Choose $D>|B|$, set $r(u)=Dz(u)$ and $q(u)=A+(D+B)z(u)$, and note that both curves belong to $\mathcal D$ because $\mathcal D$ contains all benchmark-affine curves with positive scale. Their difference is $A+Bz(\cdot)$, so they give a Pareto violation on $\X_{\mathcal D}$. Thus weak Pareto is equivalent to $\kappa_\mu\in I_S$.

For C-Pareto, compare $X=q(U)$ with a constant $c$. The lower threshold condition requires that
\[
  a_q+b_qz(\alpha)\ge c \quad \forall\alpha\in S
  \quad\Longrightarrow\quad
  a_q+b_q\kappa_\mu\ge c,
\]
for all $a_q\in\R$, $b_q>0$ and $c \in \R$. This is equivalent to $\kappa_\mu\ge\min_{\alpha\in S}z(\alpha)$. The upper threshold condition is equivalent to $\kappa_\mu\le\max_{\alpha\in S}z(\alpha)$. Hence two-sided C-Pareto is equivalent to $\kappa_\mu\in I_S$, the same condition as weak Pareto.

Finally, suppose $\kappa_\mu\in\operatorname{ri}(I_S)$. If the individual inequalities hold and at least one is strict, then the affine function $x\mapsto A+Bx$ is nonnegative on $I_S$ and positive at some represented point. An affine function with this property is strictly positive at every point in the relative interior of $I_S$. Hence $A+B\kappa_\mu>0$, giving strong Pareto.

Conversely, suppose $\kappa_\mu\notin\operatorname{ri}(I_S)$. If $\kappa_\mu\notin I_S$, choose an affine function $x\mapsto A+Bx$ that is nonnegative on $I_S$, strictly positive at some represented point, and negative at $\kappa_\mu$. If $\kappa_\mu\in I_S\setminus\operatorname{ri}(I_S)$, choose a supporting affine function that is nonnegative on $I_S$, zero at $\kappa_\mu$, and positive at some represented point. Using the same benchmark-affine construction as above gives individual weak improvements with at least one strict improvement, but the social improvement is nonpositive. Strong Pareto therefore fails. This proves the stated strong-Pareto equivalence.
\end{proof}


\subsection{Proof of Proposition \ref{thm:committee-elliptical-reduction}}
\begin{proof}[Proof of Proposition \ref{thm:committee-elliptical-reduction}]
Under the elliptical common-shape assumption, every feasible linear payoff satisfies
\[
  X_w\overset d= w'\bar R+\sqrt{w'\Sigma w}\,Z_0.
\]
Therefore, for every represented quantile level $\tau_i$,
\[
  Q_{\tau_i}[X_w]
  =w'\bar R+\sqrt{w'\Sigma w}\,Q_{\tau_i}[Z_0].
\]
Multiplying by $\lambda_i$ and summing gives
\[
  W_\lambda(w)
  =w'\bar R+
  \sqrt{w'\Sigma w}\sum_{i=1}^n\lambda_iQ_{\tau_i}[Z_0]
  =w'\bar R+
  \kappa_\lambda\sqrt{w'\Sigma w}.
\]
Because $\kappa_\lambda$ is a convex combination of the benchmark quantiles $Q_{\tau_i}[Z_0]$ and $F_{Z_0}$ is continuous and strictly increasing, $\tau_\lambda=F_{Z_0}(\kappa_\lambda)$ is well defined and
\[
  Q_{\tau_\lambda}[Z_0]=\kappa_\lambda.
\]
Thus
\[
  Q_{\tau_\lambda}[X_w]
  =w'\bar R+
  \kappa_\lambda\sqrt{w'\Sigma w}
  =W_\lambda(w).
\]
The maximization equivalence follows because the three displayed objective functions coincide pointwise on the feasible set $B$.
\end{proof}

\bibliographystyle{apalike}

\end{document}